\UseRawInputEncoding
\documentclass[conference,letterpaper]{IEEEtran}

\addtolength{\topmargin}{9mm}
\IEEEoverridecommandlockouts
\usepackage[T1]{fontenc}
\usepackage{url}
\usepackage{ifthen}
\usepackage[cmex10]{amsmath}
\usepackage{bbm}
\usepackage{amsfonts}
\usepackage[dvips]{graphicx}
\usepackage{times}
\usepackage{cite}
\usepackage{amsmath}
\usepackage{array}
\usepackage{amssymb}

\usepackage{stfloats}
\usepackage{slashbox}
\usepackage{graphicx}
\usepackage{subfigure}
\usepackage{footnote}
\usepackage{booktabs}
\usepackage{array}
\usepackage{algorithm}
\usepackage{subeqnarray}
\usepackage{cases}
\usepackage{threeparttable}
\usepackage{color}
\usepackage{epstopdf}
\usepackage{algpseudocode}
\usepackage{bm}
\usepackage{multirow}
\usepackage{adjustbox}

\interdisplaylinepenalty=2500

\newtheorem{theorem}{Theorem}
\newtheorem{remark}{Remark}

\newtheorem{corollary}{Corollary}

\hyphenation{op-tical net-works semi-conduc-tor}

\begin{document}

\title{Sum-Rate-Distortion Function for Indirect Multiterminal Source Coding in Federated Learning}

\author{%
  \IEEEauthorblockN{Naifu Zhang, Meixia Tao and Jia Wang}\thanks{This work is supported by the NSF of China under grant 61941106 and the NSF of China under grant 61771305.}
  \IEEEauthorblockA{Department of Electronic Engineering\\
                    Shanghai Jiao Tong University, Shanghai, 200240, P. R. China\\
                    Emails: \{arthaslery, mxtao, jiawang\}@sjtu.edu.cn}
}

\maketitle

\begin{abstract}
One of the main focus in federated learning (FL) is the communication efficiency since a large number of participating edge devices send their updates to the edge server at each round of the model training.
Existing works reconstruct each model update from edge devices and implicitly assume that the local model updates are independent over edge devices.
In FL, however, the model update is an indirect multi-terminal source coding problem, also called as the CEO problem where each edge device cannot observe directly the gradient that is to be reconstructed at the decoder, but is rather provided only with a noisy version.
The existing works do not leverage the redundancy in the information transmitted by different edges.
This paper studies the rate region for the indirect multiterminal source coding problem in FL.
The goal is to obtain the minimum achievable rate at a particular upper bound of gradient variance.
We obtain the rate region for the quadratic vector Gaussian CEO problem under unbiased estimator and derive an explicit formula of the sum-rate-distortion function in the special case where gradient are identical over edge device and dimension.
Finally, we analyse communication efficiency of convex Mini-batched SGD and non-convex Minibatched SGD based on the sum-rate-distortion function,
respectively.
\end{abstract}

\textit{A full version of this paper is accessible at:}
\url{https://github.com/arthaslery/My-Paper/blob/main/rate-distortion.pdf}

\section{Introduction}
Federated learning (FL) \cite{mcmahan2016communication,konen2016federated,bonawitz2019towards,yang2019federated} is a new edge learning framework that enables many edge devices to collaboratively train a machine learning model without exchanging datasets under the coordination of an edge server.
In FL, each edge device downloads a shared model from the edge server, computes an update to the current model by learning from its local dataset, then sends this update to the edge server.
Therein, the updates are averaged to improve the shared model.
Compared with traditional learning at a centralized data center, FL offers several distinct advantages, such as preserving privacy, reducing
network congestion, and leveraging distributed on-device computation.
FL has recently attracted significant attention from both academia and industry, such as \cite{9026922,9003425,9084352,tao2020zte}.

The main focus in the research area is communication-efficient FL.
Specifically, the communication efficient FL is to achieve better convergence rate (high model accuracy) with lower communication costs.
The works of SGD convergence analysis \cite{MAL-050,doi:10.1137/120880811} state that the convergence rate mainly depends on the variance bound of gradients and the number of updates.
The communication cost depends on the communication cost per update and the number of updates.
Several recent methods have been proposed to improve communication-efficiency in federated settings, including aggregation frequency control \cite{mcmahan2016communication,201564,8664630}, compression schemes \cite{seide20141,alistarh2018qsgd,zhou2018dorefanet,NIPS2017_89fcd07f,10.1007/978-3-319-46493-0_32,pmlr-v80-bernstein18a,pmlr-v54-leblond17a,aji2017sparse,lin2018deep,tsuzuku2018variance} and user scheduling \cite{8761315,9107235,amiri2020convergence,9053740}.

All the existing researches aim to reconstruct each model update from edge devices and implicitly assume that the local model updates are independent over edge devices.
However, we observe that the main objective in FL is a good estimate of a model update at the edge server by using the information received from edge devices, rather than the exact recovery of each model update from each devices, which is an indirect multi-terminal source coding problem.
Specifically, in FL, the objective is to estimate the global model update computed by gradient decent (GD) on global dataset, while the local model update computed by each edge device is a noisy version of the global model update.
In addition, the local model updates are highly correlated among different edge devices, providing opportunity for correlated source coding.
Hence, the existing works do not leverage the redundancy in the information transmitted by different edge device.

Motivated by the above issue, in this paper, we derive the rate region for the indirect multiterminal source coding problem in FL.
Our goal is to obtain the minimum achievable rate at a particular upper bound of gradient variance.
We formulate the indirect multiterminal source coding problem in FL and solve it from the standpoint of multiterminal rate-distortion theory.
Our result can be regarded as a tool to analyse the communication efficiency in a certain FL system.
The main contributions of this work are outlined below:
\begin{itemize}
\item\emph{Rate region results:}
We reveal that the multiterminal source coding problem in FL is the quadratic vector Gaussian CEO problem base on a thorough understanding of gradient distributions.
We derive the rate region for the quadratic vector Gaussian CEO problem under unbiased estimator.
For the achievability proof, we adopt the classic Berger-Tung scheme \cite{10016434852} but design an unbiased estimator at receiver.
Our converse proof is inspired by Oohama's converse in work \cite{669162} but tightens the converse bound in work \cite{669162} by the application of unbiased estimator.

\item\emph{Communication efficiency analysis:}
We derive a closed-form sum-rate-distortion function in the special case where gradient are identical over edge device and dimension.
We analyse communication efficiency of convex Minibatched SGD and non-convex Minibatched SGD based on the sum-rate-distortion function, respectively.
We provide an inherent trade-off between communication cost and convergence guarantees.
\end{itemize}

\section{Federated Learning}
To facilitate the presentation, we only focus on a basic FL setup.
However, the results can be extended to cases with gradient distribution under our assumptions.
In this section, we introduce the system model and the convergence rate of federated learning in error-free communication.

\subsection{System Model}
\begin{figure}[t]
\begin{centering}
\vspace{-0.2cm}
\includegraphics[scale=.40]{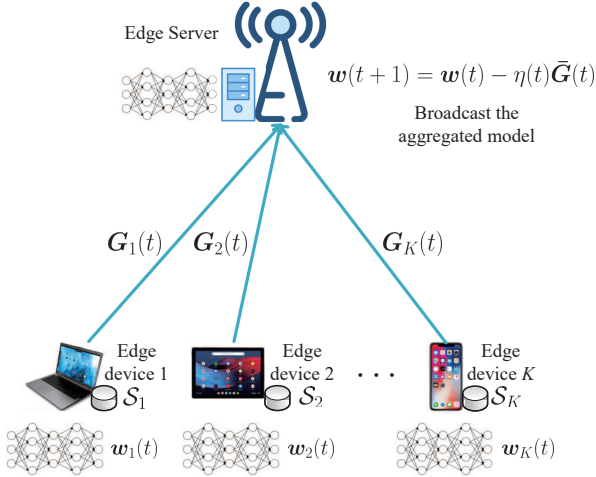}
\vspace{-0.1cm}
 \caption{\small{Illustration of federated learning system in error-free communication.}}\label{fig:system}
\end{centering}
\vspace{-0.3cm}
\end{figure}
We consider a FL framework as illustrated in Fig.~\ref{fig:system}, where a shared AI model (e.g., a classifier) is trained collaboratively across $K$ edge devices via the coordination of an edge server.
Let $\mathcal{K}=\{1,...,K\}$ denote the set of edge devices.
Each device $k\in\mathcal{K}$ collects a fraction of labelled training data via interaction with its own users, constituting a local dataset, denoted as $\mathcal{S}_k$.
Let $\bm{w}\in\mathbb{R}^P$ denote the $P$-dimensional model parameter to be learned.
The loss function measuring the model error is defined as
\begin{equation}
F(\bm{w})=\sum_{k\in\mathcal{K}}{\frac{|\mathcal{S}_k|}{|\mathcal{S}|}F_k(\bm{w})},
\label{equ:loss_function}
\end{equation}
where $F_k(\bm{w})=\frac{1}{|\mathcal{S}_k|}\sum_{i\in\mathcal{S}_k}{f_i(\bm{w})}$ is the loss function of device $k$ quantifying the prediction error of the model $\bm{w}$ on the local dataset collected at the $k$-th device, with $f_i(\bm{w})$ being the sample-wise loss function, and $\mathcal{S}=\bigcup_{k\in\mathcal{K}}{\mathcal{S}_k}$ is the union of all datasets.
Let $\bm{G}(t)\triangleq\nabla F(\bm{w}(t))\in\mathbb{R}^P$ denote the gradient vector calculated through the gradient descent (GD) algorithm at iteration $t$.
The minimization of $F(\bm{w})$ is typically carried out through the Minibatched stochastic gradient descent (Minibatched SGD) algorithm, where device $k$'s local dataset $\mathcal{S}_k$ is split into mini-batches of size $B_k$ and at each iteration $t=1,2,...$, we draw one mini-batch $\mathcal{B}_k(t)$ randomly and calculate the local gradient vector as
\begin{equation}
\bm{G}_k(t)=\nabla\frac{1}{B_k}\sum_{i\in\mathcal{B}_k(t)}{f_i(\bm{w})}.
\end{equation}
When the mini-batch size $B_k=1$, the Minibatched SGD algorithm reduces to SGD algorithm.
In this case, we say the local gradient $\bm{G}_k(t)$ has variance $\sigma^2(t)$ at iteration $t$, i.e., $\mathbb{E}[\|\bm{G}_k(t)-\bm{G}(t)\|^2]=\sigma^2(t)$.
In general case, the local gradient $\bm{G}_k(t)$ has variance $\frac{\sigma^2(t)}{B_k(t)}$ at iteration $t$.
If all the local gradients $\{\bm{G}_k(t)\}_{k=1}^K$ are available at edge server through error-free transmission, the optimal estimator of $\bm{G}(t)$ is the Sample Mean Estimator, i.e., $\bm{\bar{G}}(t)=\sum_{k=1}^K{\frac{B_k(t)}{B(t)}\bm{G}_k(t)}$, where $B(t)\triangleq\sum_{k=1}^K{B_k(t)}$ is global batch size.
It is not hard to see that the variance of the optimal estimator $\bm{\bar{G}}(t)$ is $\frac{\sigma^2(t)}{B(t)}$.
Then the edge server update the model parameter as
\begin{equation}
\bm{w}(t+1)=\bm{w}(t)-\eta(t)\bm{\bar{G}}(t),
\label{equ:centralized_update}
\end{equation}
with $\eta(t)$ being the learning rate at iteration $t$.

\subsection{Convergence rate}
Given access to local gradients, and a starting point $\bm{w}(0)$, Minibatched SGD builds iterates $\bm{w}(t)$ given by Equation (\ref{equ:centralized_update}), projected onto $\mathbb{R}^P$, where $\{\eta\}_{t\geq 0}$ is a sequence of learning rate.
In this setting, one can show:
\begin{theorem}[\cite{MAL-050}, Theorem 6.3]
\label{theorem:convergence_rate}
Let $F:\mathbb{R}^P\rightarrow\mathbb{R}$ be unknown, convex and $L$-smooth. Let $\bm{w}(0)$ be given, and let $A^2=\sup_{\bm{w}\in\mathbb{R}^P}{\|\bm{w}-\bm{w}(0)\|^2}$.
Let $T>0$ be fixed.
Given repeated, independent access to the unbiased estimator of gradients with variance bound $D$ for loss function $F$, i.e., $\mathbb{E}[\|\bm{\bar{G}}(t)-\bm{G}(t)\|^2]\leq D$ for all $t=1,2,...,T$, training with initial point $\bm{w}(0)$ and constant step sizes $\eta(t)=\frac{\gamma}{L+1}$, where $\gamma=A\sqrt{\frac{2}{DT}}$, achieves

\begin{align}
\mathbb{E}\left[L\left(\frac{1}{T}\sum_{t=1}^T{\bm{w}(t)}\right)\right]-\min_{\bm{w}\in\mathbb{R}^P}{L(\bm{w})}\leq A\sqrt{\frac{2D}{T}}+\frac{LA^2}{T}.\label{equ:convergence_rate}
\end{align}
\end{theorem}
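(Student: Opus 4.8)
The plan is to use the standard potential-function argument for stochastic gradient methods, tracking the squared distance of the iterates to a minimizer. Fix $\bm{w}^\star\in\arg\min_{\bm{w}\in\mathbb{R}^P}F(\bm{w})$ (the objective written $L(\cdot)$ in the display is the loss $F$), and abbreviate the true gradient by $\bm{G}(t)=\nabla F(\bm{w}(t))$ and the function gap by $\Delta_t\triangleq F(\bm{w}(t))-F(\bm{w}^\star)$. The two hypotheses on the estimator are unbiasedness, $\mathbb{E}[\bm{\bar{G}}(t)\mid\bm{w}(t)]=\bm{G}(t)$, and the variance bound $\mathbb{E}[\|\bm{\bar{G}}(t)-\bm{G}(t)\|^2\mid\bm{w}(t)]\le D$. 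Expanding the update (\ref{equ:centralized_update}) and taking the conditional expectation gives the one-step inequality
\begin{equation}
\mathbb{E}\big[\|\bm{w}(t+1)-\bm{w}^\star\|^2\,\big|\,\bm{w}(t)\big]\le\|\bm{w}(t)-\bm{w}^\star\|^2-2\eta\langle\bm{G}(t),\bm{w}(t)-\bm{w}^\star\rangle+\eta^2\|\bm{G}(t)\|^2+\eta^2 D,
\end{equation}
where unbiasedness kills the cross term coming from the noise and the variance bound contributes the additive $\eta^2 D$.

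Next I would eliminate the gradient quantities in favour of $\Delta_t$ using the two structural assumptions. Convexity gives the first-order bound $\langle\bm{G}(t),\bm{w}(t)-\bm{w}^\star\rangle\ge\Delta_t$, and $L$-smoothness together with global optimality gives the standard inequality $\|\bm{G}(t)\|^2\le 2L\Delta_t$. Substituting both yields the contraction
\begin{equation}
\mathbb{E}\big[\|\bm{w}(t+1)-\bm{w}^\star\|^2\big]\le\mathbb{E}\big[\|\bm{w}(t)-\bm{w}^\star\|^2\big]-2\eta(1-\eta L)\,\mathbb{E}[\Delta_t]+\eta^2 D,
\end{equation}
valid whenever $\eta\le 1/L$ so that the gap coefficient is nonnegative. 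Summing over the $T$ iterations telescopes the distance terms; discarding the nonnegative terminal distance and using $\|\bm{w}(0)-\bm{w}^\star\|^2\le A^2=\sup_{\bm{w}}\|\bm{w}-\bm{w}(0)\|^2$ bounds $2\eta(1-\eta L)\sum_t\mathbb{E}[\Delta_t]\le A^2+T\eta^2 D$. Finally, applying Jensen's inequality to the convex $F$ at the averaged iterate $\frac1T\sum_{t=1}^T\bm{w}(t)$ gives
\begin{equation}
\mathbb{E}\Big[F\big(\tfrac1T\textstyle\sum_{t=1}^T\bm{w}(t)\big)-F(\bm{w}^\star)\Big]\le\frac{A^2}{2\eta(1-\eta L)T}+\frac{\eta D}{2(1-\eta L)}.
\end{equation}

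The remaining step, and the one I expect to be the main obstacle, is the step-size calibration that turns this into the exact constants $A\sqrt{2D/T}+LA^2/T$. The two terms above are a deterministic, smoothness-driven piece scaling like $1/(\eta T)$ and a stochastic, variance-driven piece scaling like $\eta D$; their product is independent of $\eta$, so balancing them forces $\eta\propto A/\sqrt{DT}$, which is exactly the order of the prescribed $\eta=\gamma/(L+1)$ with $\gamma=A\sqrt{2/(DT)}$. The delicacy is that the smoothness constant must be made to appear \emph{only} through the additive $LA^2/T$ term and not inside the leading $1/\sqrt{T}$ term; this is precisely what the normalization by $L+1$ is engineered to achieve, by keeping the factor $1/(1-\eta L)$ controlled and routing the residual $\eta^2 L$ contribution into the second-order term. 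Chasing these constants through $1/(1-\eta L)$ without loosening the coefficient of $A\sqrt{D/T}$ from $\sqrt 2$ or the coefficient of $LA^2/T$ from $1$ is where the real care lies, whereas the potential recursion and the telescoping-plus-Jensen steps are routine once the smoothness inequality $\|\bm{G}(t)\|^2\le 2L\Delta_t$ is in hand.
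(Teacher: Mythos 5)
First, a point of reference: the paper contains no proof of this statement at all --- it is imported verbatim (with $R\to A$, $\sigma^2\to D$, $\beta\to L$) from \cite{MAL-050}, Theorem 6.3, so your attempt has to be measured against Bubeck's proof of that theorem.

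Your steps are individually correct up to your displayed bound $\frac{A^2}{2\eta(1-\eta L)T}+\frac{\eta D}{2(1-\eta L)}$: the one-step recursion, the convexity bound $\langle\bm{G}(t),\bm{w}(t)-\bm{w}^\star\rangle\ge\Delta_t$, the smoothness bound $\|\bm{G}(t)\|^2\le 2L\Delta_t$, telescoping, and Jensen are all fine. The genuine gap is the step you defer as ``calibration'': it is not a matter of care with constants --- it is impossible from your intermediate bound. Write the intended step size as $\eta=\gamma/(1+L\gamma)$ (i.e.\ Bubeck's $\frac{1}{L+1/\gamma}$; the paper's $\frac{\gamma}{L+1}$ is a transcription slip). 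Then $1-\eta L=\frac{1}{1+L\gamma}$, and your first term expands exactly to $\frac{A^2}{2\gamma T}+\frac{LA^2}{T}+\frac{L^2\gamma A^2}{2T}$, while the second is $\frac{\gamma D}{2}$. The cross term $\frac{L^2\gamma A^2}{2T}=\frac{A^3L^2}{\sqrt{2D}\,T^{3/2}}$ diverges as $D\to 0$, whereas the claimed right-hand side $A\sqrt{2D/T}+\frac{LA^2}{T}$ stays bounded, so the theorem cannot follow. Nor can a smarter step size save this route: $\min_{0<\eta<1/L}\frac{A^2}{2\eta(1-\eta L)T}=\frac{2LA^2}{T}$ (attained at $\eta=1/(2L)$), already twice the claimed smoothness term. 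The culprit is precisely the substitution $\|\bm{G}(t)\|^2\le 2L\Delta_t$ inside the distance recursion, which places the factor $(1-\eta L)^{-1}$ on \emph{both} terms of the bound.

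The cited proof uses smoothness on function values instead of on the gradient norm, which is the missing idea. Apply the descent lemma to the noisy step and add convexity at $\bm{w}(t)$ to get $F(\bm{w}(t+1))\le F^\star+\langle\bm{G}(t),\bm{w}(t+1)-\bm{w}^\star\rangle+\frac{L}{2}\|\bm{w}(t+1)-\bm{w}(t)\|^2$; then split $\bm{G}(t)=\bm{\bar{G}}(t)-\bm{\xi}(t)$ with $\bm{\xi}(t)=\bm{\bar{G}}(t)-\bm{G}(t)$ and use the three-point identity on $\langle\bm{\bar{G}}(t),\bm{w}(t+1)-\bm{w}^\star\rangle$. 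With $\delta_t=\mathbb{E}\|\bm{w}(t)-\bm{w}^\star\|^2$ this yields
\begin{equation}
\mathbb{E}[F(\bm{w}(t+1))]-F^\star\le\frac{1}{2\eta}\left(\delta_t-\delta_{t+1}\right)-\frac{\eta}{2}(1-L\eta)\,\mathbb{E}\|\bm{\bar{G}}(t)\|^2+\eta\,\mathbb{E}\|\bm{\xi}(t)\|^2,
\end{equation}
where the last term appears because $\bm{w}(t+1)$ is correlated with the noise. Two features make the constants come out exactly: the negative $\|\bm{\bar{G}}(t)\|^2$ term partially absorbs the noise (since $\mathbb{E}\|\bm{\bar{G}}(t)\|^2\ge\mathbb{E}\|\bm{\xi}(t)\|^2$ and $1-L\eta=\eta/\gamma$, the net noise contribution is at most $(\eta-\frac{\eta^2}{2\gamma})D\le\frac{\gamma D}{2}$), and smoothness enters only \emph{additively} through $\frac{1}{2\eta}=\frac{L}{2}+\frac{1}{2\gamma}$, never as $(1-\eta L)^{-1}$. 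Telescoping and Jensen then give $\frac{LA^2}{2T}+\frac{A^2}{2\gamma T}+\frac{\gamma D}{2}=\frac{LA^2}{2T}+\frac{3}{2\sqrt{2}}A\sqrt{D/T}\le\frac{LA^2}{T}+A\sqrt{2D/T}$, which is the statement (and explains why the average is taken over the post-update iterates).
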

\begin{remark}
When the gradient vector calculated through minibatched SGD algorithm, and the model is updated with the unbiased estimator $\{\bm{\bar{G}}(t)\}_{t=1}^T$ in error-free transmission, the variance bound $D=\max\limits_{t=1,2,...,T}\frac{\sigma^2(t)}{B(t)}$.
However, the error-free transmission is infeasible in practice due to communication resource limitations.
The local gradients have to be quantized and the unbiased estimation can only be based on these quantized values.
Hence, the variance of the unbiased estimator based on quantized gradients must be larger than $\frac{\sigma^2(t)}{B(t)}$ at each iteration $t$.
\end{remark}

In general, the convergence bound (\ref{equ:convergence_rate}) increases with the variance bound of gradient estimator while the required communication bits per iteration decrease with the variance bound of gradient estimator.
To obtain the trade-off between the convergence rate and the communication cost, we need to find out what is the minimum achievable rate at a particular variance upper bound.
It is a basic problem in rate distortion theory.

\section{Multi-terminal source coding problem in FL}
In this section, in order to accurately formulate the indirect multiterminal source coding problem in federated learning, we first study the distribution of the global gradients and the local gradients in FL.
All the assumptions of the gradient distributions are justified by experiment on datasets such as MNIST in the full version of this paper.
Then we formulate a quadratic vector Gaussian CEO problem based on a thorough understanding of gradient distributions.

\subsection{Gradient distribution}
\subsubsection{Global Gradient}
Recall that the edge server is interested in this sequence $\{\bm{G}(t)\}_{t=1}^\infty$, which is the global gradient vector sequence calculated through the gradient descent (GD) algorithm.
The following are key assumptions on the distribution of $\{\bm{G}(t)\}_{t=1}^\infty$
\begin{itemize}
\item The gradient $\{\bm{G}(t)\}_{t=1}^\infty$ is independent distributed vector Gaussian sequence with zero mean. The Gaussian distribution is valid since the field of probabilistic modeling uses Gaussian distributions to model the gradient \cite{sra2012optimization,ida2013domain,6917199,miyashita2013nonparametric}. The assumption of independence is valid when the learning rate is large enough.

\item The gradient $\bm{G}(t)$ are independent over gradient vector dimension $p$'s. This assumption is valid as long as the features in a data sample are independent. Even if the gradients are strongly correlated over dimension $p$'s, the gradients can be de-correlated by the regularization methods such as sparsity-inducing regularization \cite{collins2014memory,scardapane2017group} and parameter sharing/tying \cite{dieleman2016exploiting,chen2015compressing}.
\end{itemize}

\subsubsection{Local Gradients}
For $k=1,2,...,K$, edge device $k$ carries out Minibatched SGD in FL.
Recall that $\{\bm{G}_k(t)\}_{t=1}^\infty$ is the local gradient vector sequence calculated through the Minibatched SGD algorithm at device $k$.
The local gradient vector sequence $\{\bm{G}_k(t)\}_{t=1}^\infty$ can be viewed as noisy version of $\{\bm{G}(t)\}_{t=1}^\infty$ and corrupted by additive noise, i.e., $\bm{G}_k(t)=\bm{G}(t)+\bm{N}_k(t)$.
The following are key assumptions on the distribution of $\{\bm{N}_k(t)\}_{t=1}^\infty$ for $k\in\mathcal{K}$:
\begin{itemize}
\item The gradient noise $\bm{N}_k(t)$ are Gaussian random vectors independent of the $\bm{G}$ process. The mean of $\bm{N}_k(t)$ is zero in IID data setting and non-zero in non-IID data setting. Note that the gradient noise $\bm{N}_k(t)$ depends on the selection of local batch at edge device $k$. The assumption of independence is valid since the selection of local batch is independent of the global gradient $\bm{G}$.

\item The gradient noise $\bm{N}_k(t)$ are independent and non-identical distributed over devices $k$'s. This assumption is valid as long as the selection of the local batch are independent and non-identical over edge device $k$.

\item The gradient noise $\bm{N}_k(t)$ are independent and non-identical distributed over dimension $p$'s. The reason for this assumption is similar to that of gradient $\bm{G}(t)$.
\end{itemize}

\subsection{Problem Formulation}
In this subsection, we formulate the quadratic vector gaussian CEO problem based on the observation of the distribution of the global gradient $\bm{G}(t)$ and the local gradient $\bm{G}_k(t)$.
Let the global gradient $\{\bm{G}(t)\}_{t=1}^\infty$ be an independent Gaussian vector sequence with mean 0 and variance $\mbox{diag}(\sigma_{X_1}^2(t),\sigma_{X_2}^2(t),...,\sigma_{X_P}^2(t))$.
Each $\bm{G}(t),t=1,2,...$ takes value in real space $\mathbb{R}^P$.
For $k=1,...,K$, Let the local gradient $\bm{G}_k(t)$ be noisy version of $\{\bm{G}(t)\}_{t=1}^\infty$, each taking value in real space $\mathbb{R}^P$ and corrupted by independent additive white Gaussian noise, i.e.,
\begin{align}
\bm{G}_k(t)=\bm{G}(t)+\bm{N}_k(t),
\end{align}
where $\bm{N}_k(t)$ are Gaussian random vectors independent over device $k$, dimension $p$ and iteration $t$.
For $k=1,2,...,K$, $p=1,2,...,P$ and $t=1,2,...$, we assume that $\bm{N}_k(t)$ is a centralized Gaussian variable with mean 0 and variance
$\mbox{diag}(\sigma_{N_{k,1}}^2(t),\sigma_{N_{k,2}}^2(t),...,\sigma_{N_{k,P}}^2(t))$.

\begin{figure}[t]
\begin{centering}
\vspace{-0.2cm}
\includegraphics[scale=.35]{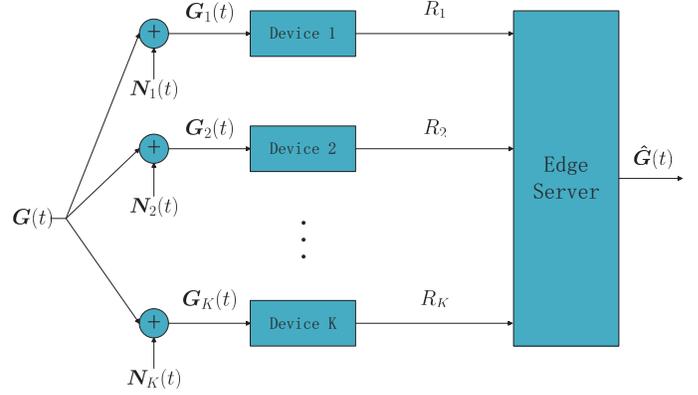}
\vspace{-0.1cm}
 \caption{\small{The CEO problem in FL.}}\label{fig:CEO}
\end{centering}
\vspace{-0.3cm}
\end{figure}
Fig.~\ref{fig:CEO} shows the CEO Problem in FL.
The edge server (CEO) is interested in the sequence $\{\bm{G}(t)\}_{t=1}^\infty$ that cannot be observed directly.
The edge server employs a team of $K$ edge devices (agents) who observes independently corrupted versions $\{\bm{G}_k(t)\}_{t=1}^\infty,k=1,2,...,K$ of $\{\bm{G}(t)\}_{t=1}^\infty$.
We write $n$ independent copies of $\bm{G}(t)$ and $\bm{G}_k(t)$ as $\bm{G}^n(t)$ and $\bm{G}_k^n(t)$, respectively.
To facilitate the following derivation, we omit iteration index $t$.
For $k=1,2,...,K$, each local gradient sequence $\bm{G}_k^n$ observed by edge device $k$ is separately encoded to $\phi_k(\bm{G}_k^n)$, and those are sent to the information processing center, where the edge server observes $\phi_k(\bm{G}_k^n),k=1,2,...,K$ and outputs the estimation $\bm{\hat{g}}^n$ of $\bm{G}^n$ by using the decoder function $\psi_K$.
The encoder function $\phi_k,k=1,2,...,K$ are defined by
\begin{align}
\phi_k:\mathbb{R}^{nP}\rightarrow\mathcal{C}_k=\{1,2,...,|\mathcal{C}_k|\},
\end{align}
and satisfy the total rate constraint
\begin{align}
\frac{1}{n}\log{|\mathcal{C}_k|}\leq R_k,k=1,2,...,K.
\label{equ:rate_constrain}
\end{align}
We write a $K$-tuple of encoder functions $\phi_k,k=1,2,...,K$ as
\begin{align}
\phi^K=(\phi_1,\phi_2,...,\phi_K).
\end{align}
Similarly, we write
\begin{align}
\phi^K(\bm{G}^{nK})=(\phi_1(\bm{G}_1^n),\phi_2(\bm{G}_2^n),...,\phi_K(\bm{G}_K^n)).
\end{align}
The decoder function $\psi_K$ is defined by
\begin{align}
\psi_K:\mathcal{C}_1\times\mathcal{C}_2\times...\times\mathcal{C}_K\rightarrow\mathbb{R}^{nP}.
\end{align}

For $\bm{\hat{G}}^n=\psi_K(\phi^K(\bm{G}^{nK}))$, define the average mean squared error (MSE) distortion by
\begin{align}
D^n(\bm{G}^n,\bm{\hat{G}}^n)=\frac{1}{n}\sum_{i=1}^n{\mathbb{E}\|\bm{G}[i]-\bm{\hat{G}}[i]\|^2}.
\end{align}
For a target distortion $D$, a rate $K$-tuple $(R_1,R_2,...,R_K)$ is said to be achievable if there are encoders $\phi^K$ satisfying (\ref{equ:rate_constrain}) and decoder $\psi_K$ such that $\bm{\hat{G}}^n$ is unbiased estimator of $\bm{G}^n$, i.e., $\mathbb{E}[\bm{\hat{G}}^n|\bm{G}^n=\bm{g}^n]=\bm{g}^n$ and $D^n(\bm{G}^n,\bm{\hat{G}}^n)\leq D$ for some $n$.
The closure of the set of all achievable rate $K$-tuples is called the rate-region and we denote it by $\mathcal{R}_\star\subseteq\mathbb{R}_+^K$.
Our aim is to characterize the region $\mathcal{R}_\star$ in an explicit form.

\begin{remark}
This problem is different from the existing Quadratic Gaussian CEO works \cite{669162,1365154,5508637}.
The estimator of our studied problem is unbiased while the existing works do not have the unbiased constraint.
We assume that the estimator must be unbiased because Theorem \ref{theorem:convergence_rate} requires that the gradient estimator is unbiased.
In addition, a sequence of biased gradients with bounded distortion cannot guarantee the convergence of federated learning.
\end{remark}

\section{Rate Region Results}
Let $\mathcal{R}_N(D)$ be the Berger-Tung achievable region using Gaussian auxiliary random variables for CEO problem in FL.
We show that
\begin{align}
&\mathcal{R}_N(D)=\\
\bigg\{&(R_1,...,R_K)\in\mathbb{R}_+^K:R_k\geq\sum_{p=1}^P{R_{k,p}},k=1,2,...,K\\
&\forall (R_{1,p},...,R_{K,p})\in\mathcal{R}_p(r_{1,p},...,r_{K,p}),p=1,2,...,P\\
&\forall (r_{1,p},...,r_{K,p})\in\mathcal{F}_p(D_p),p=1,2,...,P\label{region:r_kp}\\
&\forall (D_1,D_2,...,D_P)\in\mathbb{R}_+^P,\sum_{p=1}^P{D_p}\le D\bigg\}\label{region:D_p}
\end{align}
where
\begin{align}
&\mathcal{R}_{p}(r_{1,p},r_{2,p},...,r_{K,p})=\bigg\{(R_{1,p},R_{2,p},...,R_{K,p}):\nonumber\\
&\sum_{k\in\mathcal{A}}{R_{k,p}}\geq\sum_{k\in\mathcal{A}}{r_{k,p}}+\frac{1}{2}\log{(\frac{1}{\sigma_{X_p}^2}+\frac{1}{D_p})}\\
&-\frac{1}{2}\log{\left(\frac{1}{\sigma_{X_p}^2}+\sum_{k\in\mathcal{A}^c}{\frac{1-\exp(-2r_{k,p})}{\sigma_{N_{k,p}}^2}}\right)},\forall \mathcal{A}\subseteq\mathcal{K}\bigg\},
\label{region:set_R_p}
\end{align}
and
\begin{align}
\mathcal{F}_p(D_p)=\left\{(r_{1,p},...,r_{K,p}):\sum_{k=1}^K{\frac{1-\exp(-2r_{k,p})}{\sigma_{N_{k,p}}^2}}=\frac{1}{D_p}\right\}.
\end{align}

Our main result is
\begin{theorem}
\label{theorem:rate_region}
\begin{align}
\mathcal{R}_\star(D)=\mathcal{R}_N(D).
\end{align}
\end{theorem}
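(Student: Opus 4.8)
The plan is to establish the two inclusions $\mathcal{R}_N(D) \subseteq \mathcal{R}_\star(D)$ (achievability) and $\mathcal{R}_\star(D) \subseteq \mathcal{R}_N(D)$ (converse) separately. Throughout I would exploit the product structure of the source: since $\bm{G}$ has diagonal covariance and each noise $\bm{N}_k$ is independent across the $P$ coordinates, the vector problem splits into $P$ parallel scalar Gaussian CEO problems, coupled only through the per-device total-rate budget $R_k = \sum_p R_{k,p}$ and the total distortion budget $\sum_p D_p \le D$. This is exactly how $\mathcal{R}_N(D)$ is assembled, so the task reduces to characterizing each scalar component and then re-combining across dimensions.

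For achievability I would run the Berger--Tung scheme coordinate-wise: device $k$ quantizes $G_{k,p}^n$ through a Gaussian test channel $U_{k,p} = G_{k,p} + Z_{k,p}$ with $Z_{k,p}$ an independent Gaussian, choosing its variance so that $r_{k,p}$ equals the conditional mutual information $I(G_{k,p}; U_{k,p} \mid G_p)$. The auxiliaries $U_{1,p}, \ldots, U_{K,p}$ are conditionally independent given $G_p$, so the standard Berger--Tung inner bound yields precisely the subset constraints defining $\mathcal{R}_p(r_{1,p}, \ldots, r_{K,p})$. The new ingredient is at the decoder: instead of the MMSE estimator used in the classical quadratic Gaussian CEO problem, I would form the best linear unbiased estimator of $G_p$ from $U_{1,p}, \ldots, U_{K,p}$, i.e. an inverse-variance-weighted combination treating $G_p$ as deterministic. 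A direct computation gives its variance as $\left(\sum_k \frac{1-\exp(-2 r_{k,p})}{\sigma_{N_{k,p}}^2}\right)^{-1}$, which equals $D_p$ exactly under the constraint defining $\mathcal{F}_p(D_p)$; summing over $p$ gives $\sum_p D_p \le D$, so every tuple of $\mathcal{R}_N(D)$ is achievable with an unbiased estimator.

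For the converse I would follow Oohama's entropy-power-inequality argument \cite{669162}. Given any achievable code, for each dimension $p$ I would introduce conditional rates $r_{k,p}$ measuring the information each description carries about its observation given the source, and lower-bound $\sum_{k\in\mathcal{A}} R_{k,p}$ for every subset $\mathcal{A} \subseteq \mathcal{K}$ by applying the EPI to the effective noise seen through the descriptions in $\mathcal{A}^c$. The departure from \cite{669162} is in the distortion term: because the estimator is constrained to be unbiased, I would replace the MMSE-based distortion lower bound by an unbiased (Cram\'er--Rao type) variance bound, which removes the prior-precision contribution $1/\sigma_{X_p}^2$ present in the unconstrained problem and tightens the converse so it matches $\mathcal{F}_p(D_p)$. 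Finally I would sum the single-dimension bounds via the chain rule to recover $R_k \ge \sum_p R_{k,p}$ and conclude $\mathcal{R}_\star(D) \subseteq \mathcal{R}_N(D)$.

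The hard part will be making the unbiasedness constraint interact correctly with the EPI step in the converse. Unbiasedness is a pointwise statement about the conditional mean $\mathbb{E}[\bm{\hat{G}}^n \mid \bm{G}^n = \bm{g}^n] = \bm{g}^n$, whereas the EPI chain naturally produces averaged (Bayesian) quantities; I would need to argue carefully that the unbiased constraint forces the achievable distortion up to the pure estimation variance $\left(\sum_k \frac{1-\exp(-2 r_{k,p})}{\sigma_{N_{k,p}}^2}\right)^{-1}$ rather than the smaller MMSE value, while remaining consistent with the EPI lower bound on the rates. Verifying that the two regions coincide after this substitution — in particular checking the subset constraints for all $\mathcal{A}$, not only $\mathcal{A} = \mathcal{K}$ — is the most delicate bookkeeping.
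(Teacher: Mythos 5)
Your proposal is correct and follows essentially the same route the paper takes: Berger--Tung achievability with per-dimension Gaussian test channels and an unbiased inverse-variance-weighted decoder whose variance matches $\mathcal{F}_p(D_p)$, together with an Oohama-style converse in which unbiasedness removes the prior precision $1/\sigma_{X_p}^2$ and thereby tightens the term $\tfrac{1}{2}\log(\sigma_{X_p}^2/D_p)$ of \cite{669162} to $\tfrac{1}{2}\log(1+\sigma_{X_p}^2/D_p)$. The paper only sketches this strategy (deferring details to its full version), and your plan, including the decomposition into $P$ parallel scalar CEO problems and the flagged difficulty of reconciling pointwise unbiasedness with the averaged EPI quantities, coincides with it.
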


\begin{remark}
Parameter $D_p$ can be interpreted as the distortion of gradient estimator at dimension $p$, $r_{k,p}$ can be interpreted as the rate of the $k$-th edge device spends in quantizing its observation noise at dimension $p$, and $R_{k,p}$ can be interpreted as the rate contributed by edge device $k$ on dimension $p$.
It can be observed that our rate region with unbiased estimation constraint is a subset of the classic rate region without such constraint.
In the special case when the distortion $D$ is larger than the gradient variance $\sum_{p=1}^P{\sigma_{X_p}^2}$, the classic rate region can achieve zero.
This indicates that we can simply set $\bm{\hat{G}}=0$ at the receiver without any transmission.
However, $\bm{\hat{G}}=0$ is not an option in our unbiased setting since the corresponding estimator is biased.
Thus in this large distortion case, the rate region under unbiased estimator is still bounded away from zero.
The extra rate introduced by our rate region results is necessary for model training.
For a distortion $D$ larger than the gradient variance $\sum_{p=1}^P{\sigma_{X_p}^2}$, the model will never converge by applying the classic rate region while can converge by applying the rate region under unbiased estimator.
\end{remark}

We prove this theorem from the standpoint of multiterminal rate-distortion theory.
For the achievability proof, we adopt the classic Berger-Tung scheme \cite{10016434852} but design an unbiased estimator at receiver.
Our converse proof is inspired by Oohama's converse in \cite{669162} but is not straightforward and cannot be derived directly from traditional (biased) CEO.
We tighten the term $\frac{1}{2}\log(\frac{\sigma_X^2}{D})$ at (8) in work \cite{669162} to $\frac{1}{2}\log(1+\frac{\sigma_X^2}{D})$ by the application of unbiased estimator.
Due to page limit, the proof of Theorem \ref{theorem:rate_region} is omitted.
It can be found in the full version of this paper.

\section{Communication Efficiency of Federated Learning} 
In this section, we provide an inherent trade-off between communication cost and convergence guarantees based on the sum-rate-distortion function.
Our sum-rate-distortion function is quite portable, and can be applied to almost any stochastic gradient method.
For illustration, we analyse communication efficiency of convex Minibatched SGD and non-convex Minibatched SGD, respectively.

\subsection{Sum-rate-distortion function}
The rate region in section IV is implicit.
To facilitate analyse the communication efficiency of FL, we will first derive the sum-rate-distortion function in this section.

The sum-rate-distortion function is defined by
\begin{align}
R_{sum}(D)\triangleq\min_{(R_1,R_2,...,R_K)\in\mathcal{R}_\star(D)}{\sum_{k\in\mathcal{K}}{R_k}}.
\end{align}
Recall that the rate region $\mathcal{R}_\star(D)$ in (\ref{region:D_p}), we have
\begin{align}
\min_{(R_1,R_2,...,R_K)\in\mathcal{R}_\star(D)}\sum_{k\in\mathcal{K}}R_k=\min_{D_p}\min_{r_{k,p}}\min_{R_{k,p}}{\sum_{k\in\mathcal{K}}\sum_{p=1}^P{R_{k,p}}}.
\end{align}
Note that $\sum_{k\in\mathcal{K}}R_{k,p}\geq\sum_{k\in\mathcal{K}}r_{k,p}+\frac{1}{2}\log(1+\frac{\sigma_{X_p}^2}{D_p})$, therefore we can obtain the following sum-rate-distortion function by solving an optimization problem.
\begin{corollary}
For every $D>0$, when the gradient elements are identical over dimension $p$'s and local gradients are identical over device $k$'s.
\begin{align}
R_{sum}(D)=P\left[-\frac{K}{2}\log\left(1-\frac{\sigma_{N}^2}{KD}\right)+\frac{1}{2}\log(1+\frac{\sigma_{X}^2}{D})\right],
\end{align}
where $\sigma_{X}^2$ is the variance of global gradient and $\sigma_{N}^2$ is the variance of gradient noise.
\label{corollary:idential_sun_rate_distortion_function}
\end{corollary}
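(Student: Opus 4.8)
The plan is to perform the three nested minimizations in
\begin{align}
R_{sum}(D)=\min_{D_p}\min_{r_{k,p}}\min_{R_{k,p}}\sum_{k\in\mathcal{K}}\sum_{p=1}^{P}R_{k,p}
\end{align}
from the inside out, exploiting that the objective and every constraint decouple across the dimension index $p$. First I would fix $p$ and the auxiliary rates $(r_{1,p},\dots,r_{K,p})$ and minimize $\sum_{k\in\mathcal{K}}R_{k,p}$ over $\mathcal{R}_p$. The defining inequalities of $\mathcal{R}_p$ are subset-sum lower bounds indexed by $\mathcal{A}\subseteq\mathcal{K}$, which endow $\mathcal{R}_p$ with a contra-polymatroid structure; consequently the full sum $\sum_{k\in\mathcal{K}}R_{k,p}$ is controlled by the single constraint $\mathcal{A}=\mathcal{K}$, for which a feasible boundary point exists. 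Setting $\mathcal{A}=\mathcal{K}$ (so that $\mathcal{A}^c=\varnothing$ and the last logarithm collapses) gives the per-dimension reduction already quoted in the text, namely $\min_{R_{k,p}}\sum_{k\in\mathcal{K}}R_{k,p}=\sum_{k\in\mathcal{K}}r_{k,p}+\tfrac{1}{2}\log(1+\sigma_{X_p}^2/D_p)$.

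Next I would minimize $\sum_{k\in\mathcal{K}}r_{k,p}$ over $\mathcal{F}_p(D_p)$. In the identical case $\sigma_{N_{k,p}}^2=\sigma_N^2$, the change of variable $u_k=\exp(-2r_{k,p})\in(0,1]$ turns the constraint of $\mathcal{F}_p(D_p)$ into the linear relation $\sum_{k}u_k=K-\sigma_N^2/D_p$ and the objective into $-\tfrac{1}{2}\sum_{k}\log u_k$. Minimizing the objective is thus equivalent to maximizing the strictly concave function $\sum_{k}\log u_k$ under a linear constraint, so by AM--GM the optimizer is the symmetric point $u_k=1-\sigma_N^2/(KD_p)$, valid whenever $D_p>\sigma_N^2/K$. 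Substituting back yields $\min\sum_{k}r_{k,p}=-\tfrac{K}{2}\log(1-\sigma_N^2/(KD_p))$.

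For the outermost step, with $\sigma_{X_p}^2=\sigma_X^2$ the remaining task is to minimize $\sum_{p=1}^{P}g(D_p)$ subject to $\sum_{p=1}^{P}D_p\le D$, where $g(x)=-\tfrac{K}{2}\log(1-\sigma_N^2/(Kx))+\tfrac{1}{2}\log(1+\sigma_X^2/x)$. Since $g$ is strictly decreasing, the budget binds with equality; since the problem is symmetric in the $P$ dimensions and $g$ is convex on its feasible domain, Jensen's inequality forces the uniform allocation to be optimal, and substituting this common value into $P\,g(\cdot)$ produces the stated closed form. I expect the main obstacle to be the first step: verifying the contra-polymatroid property of $\mathcal{R}_p$, so that the three minimizations genuinely separate and the $\mathcal{A}=\mathcal{K}$ inequality is simultaneously the binding one and attainable by a single rate vector satisfying every subset constraint. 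By comparison, the concavity argument in the second step and the convexity check for $g$ in the third are routine.
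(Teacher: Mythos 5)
Your three-stage decomposition is exactly the route the paper itself sketches: the text preceding the corollary reduces $R_{sum}(D)$ to the nested minimization over $(D_p)$, $(r_{k,p})$, $(R_{k,p})$, invokes the $\mathcal{A}=\mathcal{K}$ constraint to get $\sum_{k}R_{k,p}\geq\sum_{k}r_{k,p}+\tfrac{1}{2}\log(1+\sigma_{X_p}^2/D_p)$, and leaves the rest as ``solving an optimization problem.'' Your AM--GM step for the $r_{k,p}$'s and the monotonicity-plus-convexity argument for the $D_p$'s are the correct way to finish, and the contra-polymatroid property you flag as the main obstacle does hold: writing $S_{\mathcal{A}^c}=\tfrac{1}{\sigma_{X_p}^2}+\sum_{k\in\mathcal{A}^c}\tfrac{1-\exp(-2r_{k,p})}{\sigma_{N_{k,p}}^2}$ and $\delta_i=\tfrac{1-\exp(-2r_{i,p})}{\sigma_{N_{i,p}}^2}$, the marginal increment of the rank function is $r_{i,p}+\tfrac{1}{2}\log\bigl(S_{\mathcal{A}^c}/(S_{\mathcal{A}^c}-\delta_i)\bigr)$, which increases as $\mathcal{A}$ grows (since $S_{\mathcal{A}^c}$ shrinks), so the function is supermodular and the greedy corner point attains $\sum_k R_{k,p}$ equal to the $\mathcal{A}=\mathcal{K}$ bound.

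The one thing you must repair is a factor-of-$P$ bookkeeping slip in the final substitution. You set $\sigma_{X_p}^2=\sigma_X^2$ and $\sigma_{N_{k,p}}^2=\sigma_N^2$ (per-dimension variances), allocate $D_p=D/P$, and assert that $P\,g(D/P)$ is the stated closed form; under your reading it is not --- it equals $P\bigl[-\tfrac{K}{2}\log\bigl(1-\tfrac{P\sigma_N^2}{KD}\bigr)+\tfrac{1}{2}\log\bigl(1+\tfrac{P\sigma_X^2}{D}\bigr)\bigr]$, with extra $P$'s inside the logarithms. The corollary's formula is recovered only when $\sigma_X^2$ and $\sigma_N^2$ denote the \emph{total} vector variances, $\sigma_X^2=\sum_{p}\sigma_{X_p}^2$ and $\sigma_N^2=\sum_{p}\sigma_{N_{k,p}}^2$, so that the per-dimension variances are $\sigma_X^2/P$ and $\sigma_N^2/P$ and the $P$'s cancel against $D_p=D/P$. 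That is in fact the paper's convention: throughout, the ``variance'' of a gradient vector means $\mathbb{E}\|\cdot\|^2$ summed over dimensions (e.g., $\mathbb{E}[\|\bm{G}_k(t)-\bm{G}(t)\|^2]=\sigma^2(t)$ in Section II, and ``gradient variance $\sum_{p=1}^P\sigma_{X_p}^2$'' in the remark following Theorem 2). So your optimization is sound, but as written it proves a formula different from the one in the statement; make the normalization explicit and the proof closes.
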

\begin{remark}
The derived function has the form of a sum of two nonnegative functions.
The first term decreases with the number of edge devices and is dominant for relatively small $D$.
The second term is a classical channel capacity for a Gaussian channel with power constraint $\sigma_{X}^2$ and noise variance $D$.
\end{remark}

\subsection{Convex Minibatched SGD}
Combining the sum-rate-distortion function given in Corollary \ref{corollary:idential_sun_rate_distortion_function} and the guarantees for Minibatched SGD algorithms on smooth, convex functions yield the following results:
\begin{theorem}[Smooth Convex Optimization]
\label{theorem:convex_communication_efficiency}
Let $F,L,\bm{w}(0)$ and $A$ be as in Theorem 1.
Fix $\epsilon>0$.
Let $\sigma_{X}^2(t)$ be the variance of global gradient and $\sigma_{N}^2(t)$ be the variance of gradient noise at iteration $t$.
Suppose the edge server outputs the unbiased gradient estimate $\{\bm{\hat{G}}(t)\}_{t=1}^T$ from $K$ identical edge devices accessing independent stochastic gradients with variance bound $D$, i.e., $\mathbb{E}\|\bm{G}(t)-\bm{\hat{G}}(t)\|^2\leq D$ for all $t=1,2,...,T$, and with step size $\eta(t)=\frac{\gamma}{L+1}$, where $\gamma$ is as in Theorem 1.

To guarantee the convergence rate $\mathbb{E}\left[L\left(\frac{1}{T}\sum_{t=1}^T{\bm{w}(t)}\right)\right]-\min_{\bm{w}\in\mathbb{R}^P}{L(\bm{w})}\leq\epsilon$, the number of iterations should satisfy
\begin{align}
T\geq A^2\left(\sqrt{\frac{D}{2\epsilon^2}+\frac{L}{\epsilon}}+\sqrt{\frac{D}{2\epsilon^2}}\right)^2=O\left(A^2\max(\frac{2D}{\epsilon^2},\frac{L}{\epsilon})\right)
\label{equ:number_of_iterations}
\end{align}
Moreover, the required communication bits at iteration $t$ are given by
\begin{align}
P\left[-\frac{K}{2}\log\left(1-\frac{\sigma_{N}^2(t)}{KD}\right)+\frac{1}{2}\log(1+\frac{\sigma_{X}^2(t)}{D})\right].
\label{equ:lower_bound_communication}
\end{align}
\end{theorem}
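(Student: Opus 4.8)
The plan is to observe that the statement bundles two essentially independent claims: the iteration-count bound (\ref{equ:number_of_iterations}), which is a purely algebraic consequence of the convergence guarantee of Theorem~\ref{theorem:convergence_rate}, and the per-iteration communication expression (\ref{equ:lower_bound_communication}), which is a direct instantiation of Corollary~\ref{corollary:idential_sun_rate_distortion_function}. I would prove the two parts separately.

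For the iteration-count bound, I would start from the guarantee of Theorem~\ref{theorem:convergence_rate}, namely $A\sqrt{2D/T}+LA^2/T\le\epsilon$, and invert it for $T$. Introducing the substitution $u=A/\sqrt{T}$ turns the target inequality into the quadratic $Lu^2+\sqrt{2D}\,u-\epsilon\le 0$. Taking the positive root of the associated equation gives $u=\big(\sqrt{2D+4L\epsilon}-\sqrt{2D}\big)/(2L)$, and rationalizing the denominator yields $\sqrt{T}=A\big(\sqrt{2D+4L\epsilon}+\sqrt{2D}\big)/(2\epsilon)$. Squaring and absorbing the factor $1/(2\epsilon)$ under each radical as $1/(4\epsilon^2)$ regroups the expression exactly into $A^2\big(\sqrt{D/(2\epsilon^2)+L/\epsilon}+\sqrt{D/(2\epsilon^2)}\big)^2$, which is the stated closed form. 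The order estimate then follows from the subadditivity $\sqrt{x+y}\le\sqrt{x}+\sqrt{y}$ together with an AM--GM bound on the cross term, after which whichever of $2D/\epsilon^2$ and $L/\epsilon$ dominates controls the sum, giving the $O\big(A^2\max(2D/\epsilon^2,L/\epsilon)\big)$ bound.

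For the communication expression, I would simply apply Corollary~\ref{corollary:idential_sun_rate_distortion_function} at iteration $t$. Since the edge devices are assumed identical and the target distortion at iteration $t$ is $D$, the minimum achievable sum rate is precisely $R_{sum}(D)$ evaluated with the iteration-specific variances $\sigma_X^2(t)$ and $\sigma_N^2(t)$, which is exactly (\ref{equ:lower_bound_communication}). The only point worth checking is that the unbiased-estimator hypothesis required by Theorem~\ref{theorem:convergence_rate} is consistent with the unbiased rate region underlying the corollary, so that the two ingredients combine without conflict.

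The main obstacle is confined to the algebraic inversion in the first part: one must select the positive root of the quadratic and rationalize carefully so that the result closes exactly to the nested-radical form rather than to a merely asymptotically equivalent expression. The order estimate is routine once the correct bounding inequalities are in place, and the second part is immediate, so the proof is short, with the quadratic solution being the step where a slip is most likely to occur.
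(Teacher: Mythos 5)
Your proposal is correct and follows essentially the same route as the paper: invert the convergence bound of Theorem~\ref{theorem:convergence_rate} algebraically to obtain (\ref{equ:number_of_iterations}), then invoke the achievability of the sum rate $R_{sum}(D)$ from Corollary~\ref{corollary:idential_sun_rate_distortion_function} (via the rate-region definition) with the iteration-$t$ variances to obtain (\ref{equ:lower_bound_communication}). In fact you supply more detail than the paper, which merely asserts the equivalence $A\sqrt{2D/T}+LA^2/T\leq\epsilon \Leftrightarrow T\geq A^2\bigl(\sqrt{D/(2\epsilon^2)+L/\epsilon}+\sqrt{D/(2\epsilon^2)}\bigr)^2$ without carrying out the quadratic-root computation you spell out.
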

\begin{proof}
To guarantee the convergence rate $\mathbb{E}\left[L\left(\frac{1}{T}\sum_{t=1}^T{\bm{w}(t)}\right)\right]-\min_{\bm{w}\in\mathbb{R}^P}{L(\bm{w})}\leq\epsilon$, from Theorem 1 we should have
\begin{align}
A\sqrt{\frac{2D}{T}}+\frac{LA^2}{T}\leq\epsilon\Leftrightarrow T\geq A^2\left(\sqrt{\frac{D}{2\epsilon^2}+\frac{L}{\epsilon}}+\sqrt{\frac{D}{2\epsilon^2}}\right)^2.
\end{align}

By the definition of rate region, there exists encoders $\phi^K(t)$ and decoder $\psi_K(t)$ satisfying that the sum rate achieves $R_{sum}(D)(t)=P\left[-\frac{K}{2}\log\left(1-\frac{\sigma_{N}^2(t)}{KD}\right)+\frac{1}{2}\log(1+\frac{\sigma_{X}^2(t)}{D})\right]$ such that $\mathbb{E}[\bm{\hat{G}}^n(t)|\bm{G}^n(t)=\bm{g}^n(t)]=\bm{g}^n(t)$ and $D^n(\bm{G}^n(t),\bm{\hat{G}}^n(t))\leq D$ for a large enough sequence length $n$.
Therefore, $R_{sum}(D)(t)$ is the required communication bits at iteration $t$.
\end{proof}
\begin{remark}
In the most reasonable regimes, the first term of the max in (\ref{equ:number_of_iterations}) will dominate the number of iterations necessary.
Specifically, the number of iterations will depend linearly on the estimate gradient variance bound $D$.
In SGD-based learning, the gradient variance $\sigma_{X}^2(t)$ is large at the begining, then gradually approaches to zero when the model converges.
The noise variance $\sigma_{N}^2(t)$, on the other hand, remains approximately unchanged due to the randomness of local datasets throughout the training process.
As a result, according to (\ref{equ:lower_bound_communication}), the required communication bits per iteration decrease during the training and converges to $-P\frac{K}{2}\log\left(1-\frac{\sigma_{N}^2(t)}{KD}\right)$.
\end{remark}

\subsection{Non-convex Minibatched SGD}
In many interesting applications such as neural network training, however, the objective is non-convex, where much less is known.
However, there has been an interesting line of recent work which shows that Minibatched SGD at least always provably converages to a local minima, when $F$ is smooth.
For instance, by applying Theorem 2.1 in \cite{doi:10.1137/120880811}, we immediately obtain the following communication efficiency results for FL.
\begin{theorem}[Smooth Non-convex Optimization]
Let $F:\mathbb{R}^P\rightarrow\mathbb{R}$ be a $L$-smooth (possibly non-convex) function, and Let $\bm{w}(0)$ be given.
Let the iteration limit $T>0$ be fixed and the random stoping iteration with probability mass function supported on $\{1,...,T\}$.
Suppose the edge server outputs the unbiased gradient estimate $\{\bm{\hat{G}}(t)\}_{t=1}^T$ from $K$ identical edge devices accessing independent stochastic gradients with variance bound $D$, i.e., $\mathbb{E}\|\bm{G}(t)-\bm{\hat{G}}(t)\|^2\leq D$ for all $t=1,2,...,T$, and with step size $\eta=O(\frac{1}{L})$, then we have
\begin{align}
\frac{1}{L}\mathbb{E}[\|\nabla F(\bm{w})\|^2]\leq O\left(\frac{F(\bm{w}(0))-F^*}{N}+\frac{D}{L}\right).
\end{align}
Moreover, the required communication bits at iteration $t$ are the same as in Theorem \ref{theorem:convex_communication_efficiency}.
\end{theorem}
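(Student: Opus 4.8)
The plan is to mirror the two-part structure of the proof of Theorem~\ref{theorem:convex_communication_efficiency}: first obtain the convergence guarantee by invoking the cited non-convex stochastic-gradient analysis, and then obtain the per-iteration communication cost by invoking the achievability part of Theorem~\ref{theorem:rate_region} together with the closed form of Corollary~\ref{corollary:idential_sun_rate_distortion_function}. The non-convexity enters only through the convergence half and is absorbed entirely inside the cited bound, so no fresh optimization argument is needed here.

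For the convergence inequality, I would first check that the gradient oracle used at the edge server meets the hypotheses of Theorem~2.1 in \cite{doi:10.1137/120880811}. That result asks for an $L$-smooth (possibly non-convex) objective $F:\mathbb{R}^P\rightarrow\mathbb{R}$, a step size of order $1/L$, a random stopping iterate supported on $\{1,\dots,T\}$, and, at each step, an \emph{unbiased} estimate of $\nabla F(\bm{w})$ whose squared error is bounded by a constant. The estimate $\bm{\hat{G}}(t)$ satisfies $\mathbb{E}[\bm{\hat{G}}(t)\mid\bm{G}(t)]=\bm{G}(t)$ and $\mathbb{E}\|\bm{G}(t)-\bm{\hat{G}}(t)\|^2\le D$ by construction, so it is exactly such an oracle with variance parameter $D$. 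Substituting $D$ and $\eta=O(1/L)$ into the bound of \cite{doi:10.1137/120880811} then yields the stated inequality directly; at this step the proof is a change of notation rather than a new estimate.

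For the communication cost I would argue as in Theorem~\ref{theorem:convex_communication_efficiency}. By the achievability direction of Theorem~\ref{theorem:rate_region}, for block length $n$ large enough there exist encoders $\phi^K(t)$ and a decoder $\psi_K(t)$ whose output $\bm{\hat{G}}^n(t)$ is unbiased for $\bm{G}^n(t)$ and has average distortion at most $D$, while the total rate equals $R_{sum}(D)$ of Corollary~\ref{corollary:idential_sun_rate_distortion_function} evaluated at the iteration-$t$ variances $\sigma_X^2(t)$ and $\sigma_N^2(t)$; hence the required bits at iteration $t$ coincide with the expression in Theorem~\ref{theorem:convex_communication_efficiency}. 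The one delicate point I would flag is that the rate-distortion guarantee is asymptotic in $n$ whereas each SGD step consumes a single gradient, so the hard part is reconciling this asymptotic-versus-one-shot gap. I would handle it exactly as the convex proof does, reading the $n$ independent copies of $\bm{G}(t)$ as the source block so that the block-level unbiasedness $\mathbb{E}[\bm{\hat{G}}^n(t)\mid\bm{G}^n(t)]=\bm{G}^n(t)$ and the bound $D^n\le D$ supply, per coordinate, precisely the unbiased single-step estimate with variance at most $D$ demanded by the oracle. Everything else is substitution into the two cited results.
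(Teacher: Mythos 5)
Your proposal takes essentially the same route as the paper: the paper gives no separate proof for this theorem, presenting it as an immediate application of Theorem 2.1 in \cite{doi:10.1137/120880811} for the convergence bound, with the per-iteration communication cost inherited verbatim from the proof of Theorem~\ref{theorem:convex_communication_efficiency} --- exactly your two-part structure. The asymptotic-block-length issue you flag is handled (or rather, glossed over with the same ``large enough $n$'' reading) identically in the paper's own proof of the convex case, so your treatment matches the paper's level of rigor on that point as well.
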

\section{Conclusion}
This paper studied the indirect multiterminal source coding problem in FL.
We have characterized the rate region and show that our rate region with unbiased estimation constraint is a subset of the classic rate
region without such constraint.
We also derived the closed-form sum-rate-distortion function in special case.
Finally, we apply the sum-rate-distortion function to analyse the communication efficiency of FL.
We provide an inherent trade-off between communication cost and convergence guarantees based on the sum-rate-distortion function.
In the future work, we will study the practical distributed source coding scheme based on the estimated gradient distribution to achieve the rate-distortion function in FL.
We will also study the communication-efficient user scheduling and rate allocation schemes based on the explicit rate region in FL.

\bibliographystyle{IEEEtran}
\bibliography{IEEEabrv,rate-distortion}
\end{document}